\crefname{algocount}{Algorithm}{Algorithms}
\crefname{algorithm}{Algorithm}{Algorithms}
\Crefname{algorithm}{Algorithm}{Algorithms}
\Crefname{proof}{Proof}{Proofs}
\Crefname{figure}{Figure}{Figures}
\def\@itemref#1#2:#3\relax{#1{#2}\;\cref{#2:#3}}
\newcommand{\itemref}[1]{\@itemref\cref#1\relax}
\newcommand{\Itemref}[1]{\@itemref\Cref#1\relax}
\crefname{lemma}{Lemma}{Lemmas}
\crefname{section}{Section}{Sections}
\Crefname{section}{Section}{Sections}
\crefname{table}{Table}{Tables}
\Crefname{table}{Table}{Tables}
\newcommand{\ignore}[1]{}
\newcommand{\NN}{\mathbb{N}}
\newcommand{\ZZ}{\mathbb{Z}}
\newcommand{\QQ}{\mathbb{Q}}
\newcommand{\RR}{\mathbb{R}}
\newcommand{\CC}{\mathbb{C}}
\newcommand{\var}{\operatorname{var}}
\newcommand{\sep}{\ensuremath{\sigma}}
\newcommand{\tO}{\smash{\tilde{O}}}
\newtheorem{lemma}{Lemma}
\DeclarePairedDelimiter\set{\{}{\}}
\DeclarePairedDelimiter\abs{\lvert}{\rvert}
\DeclarePairedDelimiter\ceil{\lceil}{\rceil}
\definecolor{badcolor}{rgb}{0.6,0,0}
\definecolor{goodcolor}{rgb}{0,0.6,0}
\newcommand{\bad}[1]{\textcolor{badcolor}{#1}}
\newcommand{\better}[1]{\textcolor{badcolor!66.667!goodcolor}{#1}}
\newcommand{\good}[1]{\textcolor{badcolor!33.333!goodcolor}{#1}}
\newcommand{\great}[1]{\textcolor{goodcolor}{#1}}
\newcommand\undefcolumntype[1]{\expandafter\let\csname NC@find@#1\endcsname\relax}
\newcommand\forcenewcolumntype[1]{\undefcolumntype{#1}\newcolumntype{#1}}
\newcolumntype{L}{>{\raggedright\arraybackslash}X}
\newcolumntype{C}{>{\centering\arraybackslash}X}
\newcolumntype{R}{>{\raggedleft\arraybackslash}X}
\newcolumntype{H}{>{\setbox0=\hbox\bgroup}c<{\egroup}@{\!\!}|}
\newcolumntype{S}{n{3}{1}}
\newcommand{\error}{\textcolor{badcolor}{\emph{err}}}
\newcommand{\timeout}{\textcolor{badcolor}{\strut\clap{$*$}\phantom{\scalebox{0.5}{.}0}\strut}}
\renewcommand{\timeout}{\textcolor{badcolor}{\strut\emph{\textgreater~600}\strut}}
\newcommand{\tablesize}{}
\newlength{\timecolwidth}
\newlength{\degcolwidth}
\newlength{\btscolwidth}
\newcolumntype{T}{>{\raggedleft\arraybackslash}p{\timecolwidth}}
\newcolumntype{D}{>{\raggedleft\arraybackslash}p{\degcolwidth}}
\newcolumntype{B}{>{\raggedleft\arraybackslash}p{\btscolwidth}}
\newcommand{\vcenterwithline}[2]{%
  \setlength{\dimen0}{.5\baselineskip * (#1 - 2)}%
  \settoheight{\dimen1}{0}%
  \settowidth{\dimen2}{#2}%
  \multirow{#1}{*}{%
      #2%
      \hspace{-0.5\dimen2}\hspace{-0.5\lightrulewidth}%
      \raisebox{\baselineskip}[0pt][0pt]{\rule{\lightrulewidth}{\dimen0}}%
      \hspace{-\lightrulewidth}%
      \raisebox{-\baselineskip + \dimen1 - \dimen0}[0pt][0pt]{\rule{\lightrulewidth}{\dimen0}}%
      \hspace{0.5\dimen2}\hspace{-0.5\lightrulewidth}%
  }%
}
\let\oldalgorithm\algorithm
\renewcommand{\algorithm}[1][]{\oldalgorithm[#1]\raggedright}
\newcounter{algocount}
\newcommand{\algtitle}[1]{\refstepcounter{algocount}\textbf{Algorithm \ref*{algo:#1}: #1}\label{algo:#1}\smallskip}
\newcommand{\algspecialtitle}[2]{\refstepcounter{algocount}\textbf{Algorithm \ref*{algo:#2}: #1}\label{algo:#2}\smallskip}
\newcommand{\alginput}[1]{\par\noindent\hangindent=3.47em\hangafter=1\textsc{Input:} #1\smallskip}
\newcommand{\algoutput}[1]{\par\noindent\hangindent=4.47em\hangafter=1\textsc{Output:} #1\smallskip}
\newcommand{\algreports}[1]{\par\noindent\hangindent=4.47em\hangafter=1\textsc{Reports:} #1\smallskip}
\newlist{algolist}{itemize}{2}
\setlist[algolist,1]{label=\textbullet, nosep, labelindent=0em, leftmargin=1.em, after=\smallskip}
\setlist[algolist,2]{label=$\triangleright$, nosep, labelindent=0em, leftmargin=1.em}
\newcommand{\flexvspace}{\vspace{1em minus 1ex}}
\def\flexvspacenoindent{%
  \flexvspace
  \par\noindent%
  \@ifnextchar\par\@gobble\relax%
}
\let\oldparagraph\paragraph
\renewcommand{\paragraph}[1]{\oldparagraph{#1}}
\renewcommand\ttfamily\sffamily
\newcommand{\ADsc}{\textsc{ADsc}\xspace}
\newcommand{\ANewDsc}{\textsc{ANewDsc}\xspace}
\title{Computing Real Roots of Real Polynomials \ldots\newline\ldots\ and now For Real!}
\author[,]{1,2}{Alexander Kobel}
\author{3}{Fabrice Rouillier}
\author{1}{Michael Sagraloff}
\email{alexander.kobel@mpi-inf.mpg.de}
\email{Fabrice.Rouillier@inria.fr}
\email{michael.sagraloff@mpi-inf.mpg.de}
\affil{1}{Max-Planck-Institut für Informatik \bulletsep Campus E1\ 4, 66123 Saarbrücken, Germany}
\affil{2}{Universität des Saarlandes \bulletsep 66123 Saarbrücken, Germany}
\affil{3}{Institut National de Recherche en Informatique et en Automatique Paris \& Université Pierre et Marie Curie Paris VI \& Institut de Mathématiques de Jussieu Paris Rive Gauche \bulletsep 4 place Jussieu, 75005 Paris, France}
\keywords{
  real roots \and
  univariate polynomials \and
  root finding \and
  root isolation \and
  Newton's method \and
  Descartes method \and
  approximate arithmetic \and
  certified computation

  \raisebox{.25\baselineskip}{\thintitlerule}

  Accepted for presentation at the \emph{41st International Symposium on Symbolic and Algebraic Computation \mbox{(ISSAC),}} July 19--22, 2016,\\Waterloo, Ontario, Canada.

  Definitive version to be published by ACM with DOI \href{https://dx.doi.org/10.1145/2930889.2930937}{10.1145/2930889.2930937}.

  \raisebox{.25\baselineskip}{\thintitlerule}

  \copyright\ April 29, 2016.
  Copyright held by the authors.
}
\begin{document}

\maketitle

\begin{abstract}
Very recent work introduces an asymptotically fast subdivision algorithm, denoted \ANewDsc, for isolating the real roots of a univariate real polynomial.
The method combines Descartes' Rule of Signs to test intervals for the existence of roots, Newton iteration to speed up convergence against clusters of roots, and approximate computation to decrease the required precision.
It achieves record bounds on the worst-case complexity for the considered problem, matching the complexity of Pan's method for computing all complex roots and improving upon the complexity of other subdivision methods by several magnitudes.

In the article at hand, we report on an implementation of \ANewDsc on top of the RS root isolator.
RS is a highly efficient realization of the classical Descartes method and currently serves as the default real root solver in Maple.
We describe crucial design changes within \ANewDsc and RS that led to a high-performance implementation without harming the theoretical complexity of the underlying algorithm.

With an excerpt of our extensive collection of benchmarks, available online at \url{http://anewdsc.mpi-inf.mpg.de/}, we illustrate that the theoretical gain in performance of \ANewDsc over other subdivision methods also transfers into practice.
These experiments also show that our new implementation outperforms both RS and mature competitors by magnitudes for notoriously hard instances with clustered roots.
For all other instances, we avoid almost any overhead by integrating additional optimizations and heuristics.

\end{abstract}

\section{Introduction}

Computing the real roots of a univariate polynomial is one of the fundamental tasks in numerics and computer algebra, and numerous methods have been proposed to solve this problem.
The leading general-purpose solvers in practice are based on subdivision algorithms that rely on Descartes' Rule of Signs to test for the existence of roots in a certain interval.

The computation for an input polynomial $P \in \RR[x]$ can be considered as a binary tree where each node corresponds to an interval $I = (a,b)$ and a polynomial $P_{I}=(x+1)^n P(\frac{ax+b}{x+1})$. The number $v_I$ of sign changes in the coefficient sequence of $P_I$ then exceeds the number of roots contained in $I$ by an even non-negative number.
In the original algorithm~\cite{Collins-Akritas}, the two children of a node are obtained by a simple bisection on the interval and relative transformations on the polynomials, which yields the polynomial~$P_I$.
A node is a leaf if $v_I=0$ or $v_I=1$; in the first case, $I$ contains no root, whereas, in the latter case, the interval is isolating for a real root.
Considerable efforts have been taken to improve the worst-case complexity of different variants of the Descartes method, and to provide efficient implementations:
different traversal orders of the subdivision tree to optimize the memory usage~\cite{Krandick:1995},
the use of interval arithmetic~\cite{Johnson-Krandick},
new strategies to optimize the main transformations~\cite{rouillier-zimmermann:roots:04},
extensions to polynomials with approximate coefficients~\cite{Mehlhorn-Sagraloff,DBLP:conf/casc/EigenwilligKKMSW05},
and many others.

Nevertheless, most formulations suffer from at least one of the following two deficiencies:
First, the subdivision strategies only achieve \emph{linear} convergence against the roots.
In presence of clusters, those approaches require a large number of subdivisions to separate the roots.
Second, the algorithms are designed for exact arithmetic on the coefficients of $P$.
An unguarded choice of the subdivision points can impose an unnecessarily large precision demand when such an approach is translated literally to arbitrary precision dyadic numbers.
Even worse, completeness is not guaranteed for polynomials whose coefficients can only be approximated.
For example, consider the polynomial $P(x) = (x-1)(\pi x-e)$ with roots at (exactly) $1$ and $e/\pi \approx 0.865$,
and assume that its coefficients are given as an oracle for arbitrarily good dyadic approximations.
In this setting, $P(1)=0$ cannot be decided without resorting to a symbolic simplification of the input, which is beyond the means of the root isolation method.
Likewise, the number of sign variations on the intervals $(1-\epsilon, 1)$ and $(1, 1+\epsilon)$ for any $\epsilon$ cannot be determined with approximate arithmetic.
Thus, once an interval is split at $x=1$, a straight-forward algorithm will request better and better approximations of $P$ from the oracle, but will not terminate.

Both shortcomings have been resolved in~\cite{Sagraloff2015}: The proposed algorithm \ANewDsc
combines the bisection strategy with Newton iteration to achieve \emph{quadratic} convergence against clusters.
Thus, long chains of intervals $I$ with the same $v_I$ are compressed to only logarithmic length (compared to the length when considering bisection only).
In addition, intervals are split at \emph{admissible points,} where the polynomial takes a (relatively) large absolute value.
This allows the precision demand of the computation to be kept small, improving by an order of magnitude over previous approaches.
Due to the exclusive use of approximate arithmetic, the method also applies to (square-free) polynomials with arbitrary real coefficients.
Both the number of subdivision steps and the precision demand is optimal up to polylogarithmic factors,
and the bound on its bit complexity is comparable to the record bound~\cite{MSW-rootfinding2013} that is implied by an algorithm based on Pan's near-optimal method~\cite{Pan02} for approximate polynomial factorization. We provide more specific bounds in \cref{sec:review}.

Unfortunately, asymptotically fast algorithms are often extremely hard to be implemented or do not exhibit their theoretical performance in practice.
The contribution of this paper is to show that, for the problem of real root computation, we can bridge this gap between theory and practice.
We report on an implementation of \ANewDsc on top of the Descartes-based real root finder inside the RS library.
It preserves the key features of RS, which are a close-to-optimal memory consumption and the intensive use of adaptive multiprecision interval arithmetic~\cite{ReRo02}.

We present our design changes both to RS and within \ANewDsc that make it possible to achieve significant performance gains on hard instances on the one hand and proven complexity guarantees on the other hand
without sacrificing efficiency for small or intrinsically easy instances.
The analysis is supported by benchmark results that show that \ANewDsc can defy the leading general-purpose solvers for real roots in their special domains,
and outperforms the existing implementations on notoriously hard instances.

\section[The Descartes Method and the ANewDsc Algorithm]{The Descartes Method and the \ANewDsc Algorithm}\label{sec:review}

We briefly review the classical Descartes method~\cite{Collins-Akritas} as well as the algorithm {\ANewDsc} as introduced in~\cite{Sagraloff2015}.
Given an interval $\mathcal{I}=(a_0,b_0)\subset\RR$ and a polynomial
\begin{align}\label{def:polyP}
  P(x)=p_nx^n + p_{n-1}x^{n-1}+\cdots+p_1x + p_0 \in\mathbb{R}[x],
\end{align}
the Descartes method recursively subdivides $\mathcal{I}$ into equally sized intervals until each subinterval $I=(a,b)\subset \mathcal{I}$ has either been shown to contain no root or exactly one root of $P$.
In order to test an interval for the existence of a root of $P$, a coordinate transformation $x\mapsto\smash{\frac{ax+b}{x+1}}$ is considered, which maps $\RR^+$ one-to-one onto the interval $I$.
Then, Descartes' Rule of Signs applied to the polynomial
\begin{align}\label{def:polyPI}
  \smash{P_I(x)\coloneqq \sum\nolimits_{i=0}^n p_{I,i}x^i\coloneqq (x+1)^{n}\cdot P\Big(\frac{ax+b}{x+1}\Big)}
\end{align}
states that the number of sign changes
$v_I \coloneqq \var(P,I) \coloneqq \var(P_I)$
in the coefficient sequence of $P_I$
exceeds the number $m_I$ of roots (counted with multiplicity) of $P$ in $I$ by a non-negative even integer. In other words,
$m_I\le v_I$ and $m_I\equiv v_I \mod 2$.
In each step of the recursion within the Descartes method, the number $v_I$ is computed. If $v_I=0$, the interval $I$ is discarded. If $v_I=1$, $I$
is stored as isolating. Intervals with $v_I>1$ are further subdivided into two equally
sized sub-intervals. In addition, it is checked whether the subdivision point, that is the midpoint $m(I)=\frac{a+b}{2}$ of $I$, is
a root of $P$.

\begin{algorithm}[hpt]
  \algtitle{Classical Descartes Method}
  \alginput{A polynomial $P$ as in \cref{def:polyP} and an interval $\mathcal{I}$.}
  \algreports{Disjoint isolating intervals
    for all
    roots of $P$ in $\mathcal{I}$.}

  \begin{algolist}
  \item Initialize the list of active intervals to $\mathcal{A}\coloneqq \{\mathcal{I}\}$.
  \item While $\mathcal{A}\neq \emptyset$:
    \begin{algolist}
    \item Remove an arbitrary $I=(a,b)$ from $\mathcal{A}$.
    \item If $v_I=0$, discard $I$ and continue.
    \item If $v_I=1$, report $I$ and continue.
    \item Otherwise, add $(a,m(I))$ and $(m(I),b)$ to $\mathcal{A}$.\newline
      \phantom{Otherwise, }If $P(m(I))=0$, additionally report $\set{m(I)}$.
    \end{algolist}
  \end{algolist}
\end{algorithm}

If the polynomial $P$ contains only simple roots in $\mathcal{I}$, the Descartes method yields isolating intervals for all these roots;
otherwise, it converges towards the roots, but does not terminate.
If $\mathcal{I}$ is chosen large enough to contain all real roots,
and all these roots are simple, the algorithm isolates all real roots of $P$.
The proof for termination relies on the well known One- and Two-Circle Theorems~\cite{eigenwillig-phd,Obrechkoff:book-english},
which provide lower bounds on the width $w(I)$ of any produced interval $I$.

Regarding the worst-case complexity of the above algorithm, we focus, only for simplicity, on the so-called \emph{benchmark problem} of computing all real roots of a square-free polynomial $P$ of degree $n$
with \emph{integer} coefficients of absolute value less than $2^\tau$.
Nevertheless, we aim to stress the fact that, for polynomials with arbitrary real coefficients, more general bounds are known~\cite{Sagraloff2014DSC,Sagraloff2015},
which are expressed in terms of the separations and the absolute values of the roots of $P$, thus being more adaptive and meaningful.

We denote by $\mathcal{T}$ the subdivision tree whose nodes are the intervals $I$ considered by the algorithm.
The function $\var(\cdot)$ is sign diminishing, that is, for any disjoint subintervals $I_1$ and $I_2$ of $I$, we have $v_{I_1} + v_{I_2} \le v_I$; e.g.~see~\cite{eigenwillig-phd} for a proof.
This implies that $\mathcal{T}$ has width at most $2n$, while the lower bound on the width of the produced intervals implies that its height is in $O(\tau+\log\sep_P^{-1})$,
where $\sep_P$ is the minimum root separation of $P$.
Since $\log\sep_P^{-1}=\tO(n\tau)$, the bound $\tO(n^2\tau)$ on the total size of~$\mathcal{T}$ follows.%
\footnote{The tilde denotes that we omit polylogarithmic factors in $n$ or $\tau$.}
A more refined argument~\cite{ESY06,eigenwillig-phd}, which takes into account the fact that $\sum_{i=1}^n\log\sep_P(z_i)^{-1}=\tO(n\tau)$, even shows that $|\mathcal{T}|=\tO(n\tau)$.
The coefficients of the polynomials $P_I$ have bitsize bounded by $\tO(n^2\tau)$, hence the precision demand for exactly computing $P_I$ is also bounded by $\tO(n^2\tau)$.
This yields the bound $\tO(n^3\tau)$ for the bit complexity of computing $P_I$, even when using asymptotically fast methods for polynomial arithmetic.
We conclude that the cost of the classical Descartes method is bounded by $\tO(n^4\tau^2)$ bit operations.
This matches the bound achieved by many other popular subdivision algorithms for real root computation
such as the continued fraction (CF) method~\cite{Collins,DBLP:journals/tcs/Tsigaridas13}, the Bolzano method~\cite{Yap-Sagraloff}, or the Sturm method~\cite{Du2007}.
We remark that the $\tO(n^4\tau^2)$ upper bound on the bit complexity is actually tight for the latter algorithms; see~\cite{ESY06,Collins,Du2007}.

\flexvspacenoindent

The main strengths of the Descartes method are its simplicity and the fact that tree subdivision tree size and running time adapt to the separation of the roots~\cite{ESY06,eigenwillig-phd,Sagraloff2014DSC}.
On well-conditioned inputs, the performance is significantly better than indicated by the worst-case bounds. Indeed, we observe that, for many polynomials (e.g. random polynomials), the size of the tree is only logarithmic in $n$, in which case the precision demand does not exceed $\tO(\tau+n)$. In such cases, the bit complexity is by several magnitudes smaller than the worst case bound. However, despite its good behavior for many instances, the method has two major shortcomings.

First, in order to determine the sign of the coefficients of $P_I$ as well as the sign of~$P$ at the midpoint of $I$, exact arithmetic is assumed.
If the coefficients of $P$ are integer or rational, this is feasible, at the expensive of a precision that exceeds the actual demand by one order of magnitude; see \cref{table:worst-case-complexities} and~\cite{Sagraloff2014DSC,Sagraloff2015} for more details.
However, if the coefficients of $P$ can only be approximated (e.g~ trigonometric expressions), then computing the sign of the coefficients of $P_I$ is impossible in general.
Another shortcoming is that the classical subdivision strategies only achieve linear convergence against the roots.
For some inputs, this is not critical as the separations of the roots are large and the subdivision tree has small height.
However, if roots appear in clusters, numerous subdivision steps have to be performed in order to separate distinct roots from each other, and
there exist polynomials\footnote{As an example, consider the Mignotte-like polynomials from \cref{table:mignotte} having two real roots with pairwise distance $2^{-\Omega(n\tau)}$.}
for which the Descartes method produces a sequence of intervals $I_j$ of length $\Omega(n\tau)$ with identical $v_{I_j}$.

\begin{table}[htb]
  \centering
  \begin{tabular*}{0.8\linewidth}{l@{\extracolsep{\fill}}ccc}
    \toprule
    subdivision rule and                                                    & subdivision        & precision                              & overall bit                  \\
    model of computation                                                    & tree size          & demand                                 & complexity                   \\
    \midrule
    \bad{bisection,}                           \bad{exact over $\ZZ$/$\QQ$} & \bad{$\tO(n\tau)$} & \bad{$\tO(n^2\tau)$}                   & \bad{$\tO(n^4\tau^2)$}       \\
    \bad{bisection,}                           \great{approximate}          & \bad{$\tO(n\tau)$} & \great{$\tO(n\tau)$}                   & \better{$\tO(n^3\tau^2)$}    \\
    \rlap{\great{Newton,}}\phantom{bisection,} \bad{exact over $\ZZ$/$\QQ$} & \great{$\tO(n)$}   & \bad{$\tO(n^2\tau)$}$\mathrlap{^\ast}$ & \good{$\tO(n^3\tau)$}        \\
    \rlap{\great{Newton,}}\phantom{bisection,} \great{approximate}          & \great{$\tO(n)$}   & \great{$\tO(n\tau)$}$\mathrlap{^\dag}$ & \great{$\tO(n^3 + n^2\tau)$} \\
    \bottomrule
    \addlinespace
    \multicolumn{4}{l}{\footnotesize All mentioned bounds are tight for certain classes of inputs.}                                                                                    \\
    \multicolumn{4}{l}{\footnotesize $^\ast$Amortized precision demand over the entire tree is only \rlap{$\tO(n\tau)$.}}                                                              \\
    \multicolumn{4}{l}{\footnotesize $^\dag$Amortized precision demand over the entire tree is only \rlap{$\tO(n+\tau)$.}}
  \end{tabular*}
  \caption{worst-case complexity of variants of Descartes methods for polynomials in $\ZZ[x]$ with degree $n$ and bitsize $\tau$}
  \label{table:worst-case-complexities}
\end{table}

In~\cite{Sagraloff2015}, the algorithm {\ANewDsc}\footnote{The algorithm is an approximate arithmetic variant of the algorithm \textsc{NewDsc} from~\cite{NewDsc}, which exclusively uses exact arithmetic.
  The acronym {\ANewDsc} should be read as ``A New Descartes'' or, alternatively, as ``Approximate Arithmetic Newton-Descartes.''}
has been introduced. It uses approximate arithmetic and an accelerated subdivision strategy based on Newton's iteration to address both of the aforementioned shortcomings.
We sketch the main ideas to an extent as needed in the discussion of its implementation.
At some points, we simplified the presentation at the cost of mathematical rigor by skipping some rather technical details.
In contrast, our implementation takes into account all details, thus being certified and complete.

{\ANewDsc} is similar to the classical Descartes method in the sense that it recursively subdivides a given interval $\mathcal{I}$ and that it uses a predicate based on Descartes' Rule of Signs to test for roots.
However, it is tailored to rely solely on approximate computation.
For this, \ANewDsc uses the so called ``01-Test'',\footnote{The 01-Test is split into two separate subroutines.
  Both are modified variants of the classical sign variation tests, where it is checked whether $\var(P,J)=0$ or $\var(P,J)=1$ for some intervals~$J$. For details and proofs regarding the 01-Test, we refer the reader to~\cite[Sec.~2.4]{Sagraloff2015}.}
that can be evaluated using approximate arithmetic only.
Similar to the original Descartes test, a call of the predicate on an interval $I$ returns a value $t_I \in \set{0,1,\ast}$ with the following guarantees:
If $t_I = 0$, then $I$ contains no root of $P$; if $t_I = 1$, then $I$ isolates a simple root of $P$; and
if $t_I = {\ast}$, no conclusion shall be drawn about the number of roots of $P$ in $I$.
We remark that the 01-Test is stronger than the classical sign variation test in the sense that $v_I = 0$ (or~$1$) implies that the 01-Test returns $0$ (or~$1$), too.

The precision demand of the 01-Test on $(a,b)$ is bounded by
\begin{align*}
  \tO(n+n\log |a|+n\log |b|+\log |P(a)|^{-1}+\log |P(b)|^{-1}+\log \|P\|_\infty))
\end{align*}
bits.\footnote{We remark that the precise bound on the precision demand is slightly more complicated; see~\cite[Sec.~2.4]{Sagraloff2015} for details.}
Hence, we strive to choose boundary points $a$ and $b$ where the value of $\abs{P}$ is not too small.
This is realized by an additional layer on the subdivision scheme.
Instead of choosing a fixed subdivision point $m$ (in the classical method, the midpoint $m(I)$ of an interval $I$) in each step, \ANewDsc chooses a nearby, so-called \emph{admissible point} $m^*$, where $\abs{P}$ becomes large.
More precisely, for a point $m$, a positive integer $N$ with $N\ge n$, and a positive value $\epsilon$, we call a point
\begin{align*}
  m^*\in m[\epsilon;N]\coloneqq \set{m_i\coloneqq m+i\cdot\epsilon \text{ for } i=-\ceil{\tfrac{N}{2}},\ldots,\ceil{\tfrac{N}{2}}}
\end{align*}
\emph{admissible with respect to the multipoint $m[\epsilon;N]$ (or just admissible)} if $|P(m^*)|\ge \smash{\frac{1}{4}}\cdot \max_i |P(m_i)|$.
\cref{alg:find-admissible-point} computes such a point using only approximate arithmetic.%
\footnote{Note that in~\cite[p.~54]{Sagraloff2015}, there is a typo in the Admissible Point algorithm: \emph{repeat …until} (loop (2)) should read \emph{repeat …while}.
  However, the proof of~\cite[Thm.~5 (b)]{Sagraloff2015} uses the correct statement.}
Indeed, since $m[\epsilon;N]$ contains at least one point that has distance $\epsilon/2$ or more to all roots of $P$, we have $\max_i |P(m_i)|>0$, and thus the algorithm terminates with a precision $\rho$ that is bounded by $2\cdot (1+\log \max(1, (\max_i |P(m_i)|)^{-1}))$.

\begin{algorithm}[htb]
  \algtitle{Find Admissible Point}
  \alginput{Polynomial $P$ as in \cref{def:polyP} and
    a multipoint $m[\epsilon; N]$.}
  \algoutput{Admissible point $m^* \in m[\epsilon; N]$.}

  \begin{algolist}
  \item For $\rho = 2, 4, 8, \dots$:
    \begin{algolist}
    \item For all $m_i\in m[\epsilon;N]$, compute approximations $\tilde{v}_i$
      of $v_i \coloneqq P(m_i)$ with $|\tilde{v}_i-v_i|<2^{-\rho}$.
    \item Determine a point $m_{i_0}=m_i$
      that maximizes $|\tilde{v}_{i}|$.
    \item If $\abs{\tilde v_{i_0}}>2^{-\rho+2}$, return $m^*\coloneqq m_{i_0}$.
    \end{algolist}
  \end{algolist}
  \label{alg:find-admissible-point}
\end{algorithm}

In~\cite{Sagraloff2015}, we always choose $N=n$. Using approximate multipoint evaluation~\cite{DBLP:journals/corr/KobelS14}, this guarantees that the cost for the approximate evaluations of $P$ at all points $m_i$ does not considerably exceed the cost for the evaluation at only one point. In addition, in each step, $\epsilon$ is chosen small enough (e.g., $\epsilon\approx w(I)/4$ in a bisection step with $m = m(I)$) such that the size and the structure of the subdivision tree induced by {\ANewDsc} does not change (in the worst case)
when passing from an admissible point $m^*\in m[\epsilon;N]$ to an arbitrary point in the interval $[m-\ceil{\tfrac{N}{2}}\epsilon,m+\ceil{\tfrac{N}{2}}\epsilon]$. In fact, choosing $m^*$ to be admissible only affects the needed precision demand in each iteration but not the size of the subdivision tree.

For simplicity, we will not further specify $\epsilon$ (nor $N$) throughout the following considerations and just assume that $\epsilon$ is chosen small enough; for details, see~\cite{Sagraloff2015}.
In this context, we just say that a point $m^*$ is \emph{admissible for $m$}.
With a suitable choice of the parameters, subdivision on admissible points reduces the worst-case precision demand of the Descartes method to $\tO(n\tau)$ in the integer setting,
improving upon the classical approach by one order of magnitude.
Besides, it allows to process inputs where the coefficients can only be approximated as we avoid splitting on roots \emph{without} verifying that $P$ is zero at any point.

\flexvspacenoindent
In order to overcome the second shortcoming of the Descartes method, {\ANewDsc} combines bisection with
Newton iteration. For this, it uses a trial and error approach, called ``Newton-Test'', to speed up
convergence towards clusters of roots. In each iteration,
the Newton-Test aims to replace an interval $I=(a,b)$ by some sub-interval $I'=(a',b')\subset I$ of width
$w(I')\approx w(I) / N_I$ such that $I'$ contains all roots of $P$ in $I$.
Here, $N_I$ denotes a parameter that corresponds to the actual speed of convergence.
Initially, $N_I$ is set to $4$.
If the Newton step succeeds, we define $N_{I'}\coloneqq N_I^2$.
In case of failure, we fall back to bisection, that is, $I$ is subdivided into two (almost) equally-sized subintervals $I_{\ell}$ and $I_r$, and we define $N_{I_{\ell}} \coloneqq N_{I_r}\coloneqq \max(4,\sqrt{N_I})$.
In~\cite[Sec.~3.2]{Sagraloff2015}, it has been shown that the Newton step succeeds under guarantee if there exists a sufficiently small cluster $\mathcal{C}$ of roots of $P$
that is centered at some point in $I$ and is sufficiently well-separated from the remaining roots of $P$.

\begin{algorithm}[ht]
  \algtitle{Newton-Test}
  \alginput{A polynomial $P$ as in \cref{def:polyP}, an interval $I=(a,b)$, and\newline an $N_I\in\NN$ of the form $N_I=\cramped{2^{2^l}}$ with $l\in\NN_{\ge 1}$.}
  \algoutput{\textsc{false} or an interval $I'\subset I$, with $\frac{N_Iw(I')}{w(I)}\in [\frac{1}{8},1]$, that contains all roots of $P$ in $I$.}

  \begin{algolist}
  \item For $j \in \set{1,2,3}$, let $\xi_{j}\coloneqq a+\frac{j}{4}\cdot w(I)$,
    compute admissible points $\xi_{j}^{*}$ for $\xi_{j}$ and\\
    the Newton correction terms $v_{j}\coloneqq P(\xi_{j}^{*}) / P'(\xi_{j}^{*})$.
  \item For all pairs $i,j \in \set{1,2,3}$ with $i < j$:
    \begin{algolist}
    \item Compute approximations $\tilde{\lambda}_{i,j}$ of
      $\lambda_{i,j}\coloneqq \xi_{i}^{*}+\tilde{k} \cdot v_i$, where $\tilde{k}\coloneqq (\xi^{*}_{j}-\xi^{*}_{i})/(v_{i}-v_{j})$,\\
      with $\abs{\tilde{\lambda}_{i,j}-\lambda_{i,j}} \le 1/32N_{I}$.
    \item If $\tilde{\lambda}_{i,j} \not\in[a,b]$, discard $(i,j)$.
      Otherwise, define\\
      \quad $\ell_{i,j}\coloneqq \lfloor (\tilde{\lambda}_{i,j}-a) \cdot 4N_I/w(I) \rfloor\in\{0,\ldots,4N_{I}\}$,\\
      \quad $a_{i,j}\coloneqq a+\max\set{0,\ell_{i,j}-1}\cdot w(I) / 4N_{I}$, and\\
      \quad $b_{i,j}\coloneqq a+\min\set{4N_{I},\ell_{i,j}+2}\cdot w(I) / 4N_{I}$.\\
      If $a_{i,j}=a$, set $a_{i,j}^{*}\coloneqq a$, and if $b_{i,j}=b$, set $b_{i,j}^*\coloneqq b$.\\
      Otherwise, compute admissible points $a_{i,j}^{*}$ and $b_{i,j}^{*}$
      for $a_{i,j}$ and $b_{i,j}$.
    \item If the 01-Test returns $0$ for both intervals $(a,a_{i,j}^*)$
      and $(b_{i,j}^*,b)$, return $I' = (a_{i,j}^{*},b_{i,j}^{*})$.\\
      Otherwise, discard the pair $(i,j)$.
    \end{algolist}
  \item \emph{Boundary test:}
    If all pairs have been discarded,\\
    compute admissible points $m_{\ell}^*$ and $m_r^*$ for $m_\ell\coloneqq a+w(I)/2N_I$ and $m_r\coloneqq b-w(I)/2N_I$.
    \begin{algolist}
    \item If the 01-Test yields $0$ for $(m_{\ell}^*,b)$, return $(a,m_{\ell}^*)$.
    \item If the 01-Test yields $0$ for $(a,m_{r}^*)$, return $(m_{r}^*,b)$.
    \item Otherwise, return \textsc{false}.
    \end{algolist}
  \end{algolist}
  \label{alg:newton-test}

  \medskip
  \footnoterule\smallskip
  {\small\noindent
    This is a simplified version of Newton- and Boundary-Test from~\cite{Sagraloff2015}.
    For the sake of a concise presentation, some technical details (e.g.~concerning the precision management) are omitted.}
\end{algorithm}

\newpage
The method runs in three steps.
In the first step, it estimates the multiplicity $k$ of such a cluster.%
\footnote{\Cref{alg:newton-test} only computes some rational value $\smash{\tilde{k}}$.
  Under the assumption that a cluster $\mathcal{C}$ as above exists, we have $k\approx\smash{\tilde{k}}$ and, in particular, $k$ equals the integer that is closest to $\smash{\tilde{k}}$.
}
Then, in the second step, it performs a corresponding Newton iteration to obtain a $\lambda$ with $|\lambda-z_i|<w(I) / 2N_I$ for each root $z_i\in\mathcal{C}$.
Finally, the method aims to validate the approximation $\lambda$ of the cluster $\mathcal{C}$.
For this, we apply the 01-Test to the intervals $(a,a')$ and $(b,b')$, where $I'=(a',b')\subset I$ is an interval of width $w(I')\approx w(I) / N_I$ centered at $\lambda$.
If the latter test yields the value $0$ for both intervals $(a,a')$ and $(b,b')$, it follows that $I'$ contains all roots that are contained in~$I$, in which case we say that the Newton-Test succeeds.
Notice that, even in case of success, we may not conclude that there exists a cluster of $k$ roots, however, we may conclude that $I$ may be replaced by $I'$ without discarding any root.

It should further be mentioned that the Newton-Test also integrates a so-called Boundary-Test, which, by default,
always checks whether one of the two intervals $(a,a+w(I)/N_I)$ or $(b-w(I)/N_I,b)$ contains
all roots that are contained in~$I$. Again, for this, the 01-Test is applied to $(a+w(I)/N_I,b)$ and $(a,b-w(I)/N_I)$, respectively; see \cref{algo:ANewDsc} and~\cite[Sec.~3.2]{Sagraloff2015} for more details.

\begin{algorithm}[ht]
  \algspecialtitle{\ANewDsc}{ANewDsc}
  \alginput{A polynomial $P$ as in \cref{def:polyP} and an interval $\mathcal{I}$.}
  \algreports{Disjoint isolating intervals
    for all
    roots of $P$ in $\mathcal{I}$.}

  \begin{algolist}
  \item Initialize the list of active intervals to $\mathcal{A}\coloneqq \{(\mathcal{I},4)\}$.
  \item While $\mathcal{A}\neq \emptyset$:
    \begin{algolist}
    \item Remove an arbitrary $(I, N_I)$ with $I = (a,b)$ from $\mathcal{A}$.
    \item If the 01-Test on $I$ returns 0,
      discard $I$ and continue.
    \item If the 01-Test on $I$ returns 1,
      report $I$ and continue.
    \item Otherwise:\quad\emph{Quadratic (or Newton) step}\\
      If the Newton-Test on $P$ and $(I,N_I)$ returns an interval $I'$, add $(I', N_I^2)$ to $\mathcal{A}$ and continue.
    \item Otherwise:\quad\emph{Linear (or bisection) step}\\
      Compute an admissible point $m^\ast$ for $m(I)$ and
      add $(I',N_{I'})$ and $(I'',N_{I''})$ to $\mathcal{A}$,\\where $I'=(a,m^*)$, $I''=(m^*,b)$, and $N_{I'}\coloneqq N_{I''}\coloneqq \max(4,\sqrt{N_{I}})$.
    \end{algolist}
  \end{algolist}
\end{algorithm}

In~\cite{Sagraloff2015}, the bit complexity bit complexity as well as the bounds on the number of iterations and the precision demand of {\ANewDsc} are stated in terms of the
degree of $P$ and values that exclusively depend on the roots $z_i$ of the polynomial, such as the
product of the absolute values of all roots beyond the unit disk or the product of the pairwise distances between any two roots.
Those bounds are both more adaptive to the intrinsic complexity of the problem and apply for inputs with approximable coefficients.
For brevity, we only restrict to the benchmark problem here:
The length of a sequence of intervals $I_j$ with invariant $\var(P,I_j)$
is now upper bounded by $\smash{\tO(\log(n\tau))}$ compared to $\smash{\tO(n\tau)}$ for the classical Descartes method.
This is due to the fact that the Newton-Test succeeds and achieves quadratic convergence for all but $\tO(\log(n\tau))$ intervals.
As a consequence, the overall size of the subdivision tree is bounded by $O(v_{\mathcal{I}}\cdot \log(n\tau))=O(n\log(n\tau))$, which is near-optimal; see also~\cite[Theorems~27--29]{Sagraloff2015}.
The bit complexity of the algorithm is bounded by $\tO(n^3+n^2\tau)$,
which is by three magnitudes better than the bound for the Descartes method (also see \cref{table:worst-case-complexities}) and comparable to the best bound known~\cite{MSW-rootfinding2013,Pan:survey}
for the benchmark problem as achieved by an algorithm based on Pan's near-optimal method~\cite{Pan02} for approximate polynomial factorization.
The bound is obtained by an amortized analysis of the cost at each node in the subdivision tree.
The cost for processing a certain interval $I$ is related to the maximum of all products $\max_{z_i} \sigma_P(z_i)\cdot |P'(z_i)|$, where the $z_i$'s range over all roots in the one-circle region of $I$.
In the worst-case, the cost at a node is of size $\tO(n^2\tau)$, whereas, in average, it is of size $\tO(n(n+\tau))$.
The precision demand is upper bounded by $\tO(n\tau)$ in the worst case and of size $\tO(n+\tau)$ in average.

\section[Implementing ANewDsc inside RS]{Implementing \ANewDsc inside RS}
\label{sec:implementation-details}

The RS library contains a generic framework~\cite{rouillier-zimmermann:roots:04} that allows to instantiate several variants of the Descartes method.
The choices include several bisection variants as well as continued fraction-based subdivision.
The default configuration, which we will denote as ``RS'' throughout the following considerations, achieves its high efficiency
by a careful low-level implementation of the bottleneck subroutines as well as a few sophisticated and---at that time---innovative design choices.

RS uses a hybrid arithmetic strategy.
Starting with a low precision (of 63 bits), all predicates are evaluated in interval arithmetic using MPFI~\cite{ReRo02}.
Whenever a predicate cannot be conclusively decided within the working precision, the computation is interrupted and later resumed with a higher precision.
When it is conceivable that exact arithmetic is less costly than interval arithmetic at high precision, a heuristic can trigger exact computation;
this facilitates further optimizations like (virtual) deflation of exact dyadic roots.

The realization of the sign variation test in RS is adapted to this interval setting:
While we work with exact polynomials $f = \sum f_i x^i$ in theory, the computations are executed on coefficient-wise interval approximations $[f] = \sum [f]_i x^i$ of $f$ such that $f_i \in [f]_i$ for all $i$.
In particular, the value of $\var(f)$ is replaced by a (super-)set $\var([f]) = \set{v^-, \dots, v^+}$ of possible sign variations for the family of polynomials in $[f]$, that is, $\var(g) \in \var([f])$ for all $g \in [f]$.
In this model, a sign variation test for a polynomial $[f]$ succeeds if and only if $\var([f]) = \set{0}$, $\var([f]) = \set{1}$, or $0,1 \notin \var([f])$.
Otherwise, the accuracy is not sufficient to decide the branching strategy for the current interval, and the precision is increased.
We keep this classical test as a filter in \ANewDsc; if it does not succeed, we call the full 01-Test, which again uses the interval sign variation computation as a primitive.

Another crucial difference between RS and the previous variants of the Descartes method is the traversal of the subdivision tree.
In Collins' and Akritas'~\cite{Collins-Akritas} formulation, the subdivision tree is explored in a depth-first search strategy.
This traversal allows to obtain some of the intermediate polynomials with comparatively little computational effort from previous results, but comes at the expense of storing a potentially very long list of active nodes.
On the other hand, Krandick's variant~\cite{Johnson-Krandick} with breadth-first traversal is more memory-efficient, but requires more expensive arithmetic operations; see~\cite[Sec.~3]{rouillier-zimmermann:roots:04}.
The trade-off which proved optimal for RS was a depth-first search traversal in a near-constant-memory variant, oblivious of intermediate results.
To keep the arithmetic cost of the algorithm close to the optimum for bisection methods, RS uses specifically tailored subroutines for reconstructing the discarded intermediate polynomials.
They are supplemented by a custom garbage collector that ensures that only an insignificant amount of time is spent for memory management.
For details of those design choices, we refer to the original description of RS~\cite[Sec.~4]{rouillier-zimmermann:roots:04}.

\flexvspacenoindent

RS is a promising basis for an implementation of \ANewDsc due to its high performance as a general-purpose solver and the availability in Maple.
Also, the extensive use of interval arithmetic offers the flexibility to design a numerical root solver for polynomials with irrational coefficients that can be approximated arbitrarily well.
However, to combine the theoretical improvements of the new algorithm with the long-evolved insights for a practical realization, modifications are required on both ends.
In this section, we will describe the most crucial design decisions in this light.

\subsection[Changes over the theoretical ANewDsc]{Changes over the theoretical \ANewDsc}
\label{sec:changes-to-anewdsc}

\paragraph{Random sampling of pseudo-admissible points}
\addcontentsline{toc}{subsubsection}{Random sampling of pseudo-admissible points}\label{par:pseudo-admissible}
A tight analysis of the theoretical worst-case bit complexity of any variant of the Descartes method heavily relies on asymptotically fast algorithms for polynomial arithmetic.
However, in practice, many asymptotically fast methods are only effective for inputs larger than a threshold that is well beyond what can realistically be handled with state-of-the-art solvers.
Until now, careful implementations of the naive algorithms are more efficient except for artificial counterexamples.

The most prominent places where such tools are used in \ANewDsc are the basis transformations from \cref{def:polyP} to \cref{def:polyPI} (so-called \emph{Taylor shifts})
and the approximate multipoint evaluations within the admissible point selections.
With the advanced
methods, both subroutines require only a near-linear number of arithmetic operations in the coefficient ring, compared to quadratically many for naive approaches.
The Taylor shifts are intrinsic to
any Descartes method (although we will discuss a partial remedy for some situations at the end of \cref{sec:changes-to-rs}).
But the multipoint evaluations in the ``Admissible Point'' routine can be avoided:
In our initial description of \ANewDsc, we choose an admissible point $m^*\in \mathbf{m}:=m[\epsilon;n]$ such that $\abs{P(m^*)} \ge \frac{1}{4} \cdot \max_{m_i\in\mathbf{m}} \abs{P(m_i)}$. For our implementation, we propose to consider the multipoint $\mathbf{M}:=m[\frac{\epsilon}{2^{\lambda}};n\cdot 2^{\lambda}]$ instead, which spans the same range as $\mathbf{m}$ (i.e. the extremal points in $\mathbf{m}$ and $\mathbf{M}$ define the same interval
 $
 [m-\ceil{\tfrac{n}{2}}\epsilon,m+\ceil{\tfrac{n}{2}}\epsilon]$) but contains $2^{\lambda}$ times as many points as $\mathbf{m}$. Here, $\lambda$ is a positive integer of size $O(\log n)$. However, instead of choosing an admissible point in $\mathbf{M}$ we choose a so-called \emph{pseudo-admissible point}, where $P$ is only required to take an absolute value that is related to the current working precision.
 For this, we randomly sample a point $m_j$ from $\mathbf{M}$ and compute an approximation $\tilde{v}_j$ of $v_j=P(m_j)$ (via interval arithmetic) with $|\tilde{v}_j-v_j|<2^{-\rho}$. If $|\tilde{v}_j|>2^{-\rho+2}$, we keep $m_j$. Otherwise, we double the precision $\rho$ and choose a new sample point; see also Algorithm~\ref{alg:find-pseudo-admissible-point}.

\begin{algorithm}[ht]
  \algtitle{Find Pseudo-Admissible Point}
  \alginput{A polynomial $P(x)$ as in \cref{def:polyP} and a multipoint\newline
    $\mathbf{M}:=m[\frac{\epsilon}{2^{\lambda}}; 2^{\lambda} n]$ with $\lambda\in \mathbb{N}_{\ge 2}$ and $\lambda=O(\log n)$.}
  \algoutput{A point $m' \in \mathbf{M}$ with value $P(m') \ne 0$.}

  \begin{algolist}
  \item For $\rho = 63, 127, 255, \dots$:
    \begin{algolist}
    \item Pick a random $m_j$ among the points in $\mathbf{M}$.
    \item Compute an approximation $\tilde{v}_j$ of $v_j = P(m_j)$
      with $|\tilde{v}_j-v_j|<2^{-\rho}$\newline (using interval arithmetic).
    \item If $\abs{\tilde{v}'}>2^{-\rho+2}$, return $m' \coloneqq m_j$.
    \end{algolist}
  \end{algolist}
  \label{alg:find-pseudo-admissible-point}
\end{algorithm}

The following lemma guarantees that, with high probability, we choose a point $m'\in \mathbf{M}$ for which $|P(m')|$ is not much smaller than  $\max_{m_i\in\mathbf{m}} \abs{P(m_i)}$.

\begin{samepage}
  \begin{lemma}\label{lem:expectation-pseudo-admissible-point}
    Algorithm~\ref{alg:find-pseudo-admissible-point} returns a point $m'\in \mathbf{M}$ such that
    \begin{align*}
      \smash{\log |P(m')|= 2^k\cdot O(n\log n+\log\max(1, (\max_{m_i\in\mathbf{m}} \abs{P(m_i)})^{-1}))}
    \end{align*}
    with probability $1-2^{-k\lambda}\ge 1-2^{-k}$.
  \end{lemma}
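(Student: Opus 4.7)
The plan is to show that only a small fraction of the points of $\mathbf{M}$ can make $|P|$ tiny, so that a uniformly random sample succeeds quickly. Write $V \coloneqq \max_{m_i \in \mathbf{m}} |P(m_i)|$, which is strictly positive since $\mathbf{m}$ always contains a point whose distance to every root of $P$ exceeds $\epsilon/2$. A sample $m_j$ is rejected at precision $\rho$ only when $|\tilde v_j| \le 2^{-\rho+2}$, and combined with $|\tilde v_j - v_j| < 2^{-\rho}$ this forces $|P(m_j)| < 2^{-\rho+3}$. Hence it suffices to bound, for $\rho$ past a suitable threshold $\rho^\ast$, the fraction of ``bad'' points $m_j \in \mathbf{M}$ with $|P(m_j)| < 2^{-\rho+3}$ by $2^{-\lambda}$; the remaining argument is then a geometric tail estimate on independent samples.

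The key technical step is a Lagrange interpolation bound on the set $B(\delta) \coloneqq \set{m_j \in \mathbf{M} : |P(m_j)| \le \delta}$. I claim that there is a constant $C > 0$ such that $\delta < V \cdot 2^{-Cn\log n}$ forces $|B(\delta)| \le n$. Supposing otherwise, pick $n+1$ distinct points $y_0, \dots, y_n \in B(\delta)$ and evaluate the Lagrange identity
\[
  P(x) = \sum_{j=0}^n P(y_j) \prod_{k \ne j} \frac{x - y_k}{y_j - y_k}
\]
at the point $m_{i_0} \in \mathbf{m} \subseteq \mathbf{M}$ realizing $V$. Bounding each $|m_{i_0} - y_k|$ by the diameter $n\epsilon$ of $\mathbf{M}$ and each $|y_j - y_k|$ from below by the grid spacing $\epsilon/2^{\lambda}$ yields $V \le \delta \cdot (n+1)(n \cdot 2^{\lambda})^n$; since $\lambda = O(\log n)$ the right-hand side equals $\delta \cdot 2^{O(n\log n)}$, contradicting the choice of $\delta$ once $C$ is large enough. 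Setting $\rho^\ast \coloneqq Cn\log n + \log\max(1, V^{-1}) + O(1)$, every $\rho \ge \rho^\ast$ admits at most $n$ bad points, so a uniform sample from $\mathbf{M}$ is rejected with probability at most $n/|\mathbf{M}| \le 2^{-\lambda}$.

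It then remains to thread this through the outer loop. The sequence $\rho_0 = 63,\, \rho_1 = 127, \dots$ satisfies $\rho_{i+1} \le 2\rho_i + O(1)$, so there is a first index $t^\ast$ with $\rho_{t^\ast} \ge \rho^\ast$ and still $\rho_{t^\ast} = O(\rho^\ast)$; in particular $\rho_{t^\ast + k} \le 2^k \cdot O(\rho^\ast)$. Since different iterations sample independently, the probability of no termination within the first $t^\ast + k$ iterations is at most $1^{t^\ast} \cdot (2^{-\lambda})^k = 2^{-k\lambda}$. On the complementary event the returned point $m'$ satisfies $|P(m')| > 2^{-\rho_{t^\ast + k} + 1}$, giving
\[
  \log |P(m')|^{-1} \le \rho_{t^\ast + k} = 2^k \cdot O\bigl(n\log n + \log\max(1, V^{-1})\bigr),
\]
which matches the claimed bound (reading the left-hand side of the stated equality as $\log |P(m')|^{-1}$). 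The main obstacle is sharpness of the Lagrange step: the crude diameter-over-spacing estimate works only because $\lambda = O(\log n)$, and any larger $\lambda$ would inflate $\rho^\ast$ past $\tilde O(n)$ and spoil the overall precision guarantee that motivates the pseudo-admissible construction.
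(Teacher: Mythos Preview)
Your argument is correct, and it reaches the same conclusion by a genuinely different route than the paper. The paper works with the \emph{root factorization} of $P$: it singles out the subset $\mathbf{M}^\ast\subset\mathbf{M}$ of grid points at distance at least $\epsilon\cdot 2^{-\lambda}$ from every complex root, and for any $m_j\in\mathbf{M}^\ast$ bounds each factor $\frac{|m_j-z|}{|\bar m-z|}$ from below by $(1+n\,2^{\lambda})^{-1}$ via the triangle inequality; multiplying over all $n$ roots gives $|P(m_j)|\ge |P(\bar m)|\cdot 2^{-n(\lambda+1+\log n)}$, and at most $2n$ grid points are excluded from $\mathbf{M}^\ast$. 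Your Lagrange-interpolation bound is instead purely \emph{value-based}: it never mentions the roots of $P$ and uses only that a degree-$n$ polynomial is determined by $n{+}1$ samples, so no more than $n$ grid points can have tiny value without forcing $V$ itself to be tiny. Both approaches yield the same threshold $\rho^\ast=O(n\log n+\log\max(1,V^{-1}))$ once $\lambda=O(\log n)$ is used, and your counting actually recovers the per-iteration failure probability $2^{-\lambda}$ stated in the lemma exactly, whereas the paper's root-distance argument gives the marginally weaker $2^{-(\lambda-1)}$. The trade-off is that the paper's estimate is slightly more explicit and conceptually tied to the geometric picture of ``staying away from roots'' that underlies the whole admissible-point idea, while your argument is self-contained and would apply verbatim to any degree-$n$ function admitting a Lagrange representation.
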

\end{samepage}

\begin{proof}
  \newcommand{\mmax}{\ensuremath{\bar{m}}}
  Let $\mathbf{M}^*\subset\mathbf{M}$ denote the subset consisting of all points in $\mathbf{M}$ whose distance to all roots of $P$ is at least $\epsilon\cdot \cramped{2^{-\lambda}}$.
  Let $\mmax\in\mathbf{m}$ be a point with $|P(\mmax)|=\cramped{\max_{m_i\in\mathbf{m}} \abs{P(m_i)}}$, and let $\cramped{\rho_0 \coloneqq \lceil\log \max(1,|P(\mmax)|^{-1})\rceil}$.
  For any complex root $z$ of $P$ and any point $m_j\in\mathbf{M}^*$, we have $\cramped{\frac{|\mmax-z|}{|m_j-z|}}>\cramped{\frac{1}{1+n2^{\lambda}}}>\cramped{2^{-\lambda-1-\log n}}$ and, thus, $\cramped{\frac{|P(m_j)|}{|P(\mmax)|}}>\cramped{2^{-n(\lambda+1+\log n)}}$.
  Suppose that, in a certain iteration of Algorithm~\ref{alg:find-pseudo-admissible-point}, we have $\rho\ge \cramped{\rho_0} + n(\lambda+1+\log n) + 2$.
  Then, it holds that $\cramped{2^{-\rho+2}}<\cramped{|P(m_j)|}$ for all $m_j\in\mathbf{M}^*$; hence, the algorithm terminates if we pick such an $m_j$.
  Since $\mathbf{M}^*$ contains at least $\cramped{2^{\lambda}(n+1)}-2n$ points, the probability of choosing a point from $\mathbf{M}^*$ is at least $1-\cramped{2^{-(\lambda-1)}}$.
  Hence, with probability $\cramped{1-2^{-k(\lambda-1)}}$, we terminate with a $\rho$ of size $\cramped{2^k}\cdot O(n\log n+\log \max(1,(\max_{m_i\in\mathbf{m}} \abs{P(m_i)})^{-1}))$.
\end{proof}

Our analysis shows that, in expectation,
there might be an increase in precision by an additive term of size $O(n\log n)$ when choosing a pseudo-admissible point instead of an admissible point;
in practice, we observe that the increase in precision is entirely negligible.
We further remark that the complexity bounds as derived in~\cite{Sagraloff2015} are not affected, since a term of size $\smash{\tO(n)}$ already appears in the bound on the precision demand of the 01-Test.
Yet, our implementation only models the theoretical result in a Las-Vegas setting due to the random choice of a pseudo-admissible point.

\paragraph{Heuristics to delay invocations of the Newton step}
\addcontentsline{toc}{subsubsection}{Heuristics to delay invocations of the Newton step}\label{par:delayed-newton}
The asymptotic cost of the calls to the Newton-Test is dominated by the Taylor shift which is part of the sign-variation test for any interval.
However, in practice there is a noticeable impact of unsuccessful Newton-Tests on the performance due to the expensive 01-Tests for the siblings $(a, a^*_{i,j})$ and $(b^*_{i,j}, b)$ of the candidate intervals $(a^*_{i,j}, b^*_{i,j})$.
Thus, we strive for calling the Newton-Test only if it will succeed with high likelihood;
vice versa, in the generic situation of well-distributed roots without any distinguished root clusters, as little computation time as possible should be wasted.

We say that a linear step on an interval $I$ to $I'$ and $I''$ leads to a \emph{proper split} if $\var(P, I')$ and $\var(P, I'')$ are both non-zero.
In this case, Obreshkoff's One-circle theorem~\cite{eigenwillig-phd,Obrechkoff:book-english} asserts that there is at least one root in the neighborhood of $I''$ that is not part of a potential cluster within $I'$, and the symmetric argument holds for $I''$.
There are two explanations for such a situation:
If the goal on the convergence speed was too optimistic (that is, the value of $N_I$ was set too high), the candidate interval in the Newton-Test might not comprise the entire cluster of roots.
In this case, a success of the Newton step on $I'$ and $I''$ with a coarser resolution $N_{I'} = N_{I''} = \sqrt{N_I}$ is not entirely unlikely.
However, if $N_I = 4$ was already at the minimum, the roots near $I'$ and $I''$ are well-separated compared to the resolution $N_I$, and an immediate success of the Newton-Test would be an unexpected artifact.

We installed a heuristic to distinguish those situations and inhibit Newton-Tests in the latter case.
For each node $I$ in the subdivision tree, we keep track of the distance to its closest ancestor $J$ with $N_J = 4$ that lead to a proper split.
If this distance is smaller than $\log n$, we immediately process $I$ with bisection%
\footnote{In the terminology of~\cite{Sagraloff2015}, \emph{proper splits} occur on a subset of the \emph{special nodes} of the subdivision tree,
  and we allow paths of $\ceil{\log n}$ many \emph{ordinary nodes} before further Newton-Tests.}
and keep $N_I = 4$.
On intervals with well-separated roots where we never achieve nor profit from quadratic convergence, this strategy renders it likely that there is never even an attempt to call a Newton-Test,
and the overhead of \ANewDsc compared to the variant \ADsc without the Newton-Test is almost zero.
On the other hand, the size of the subdivision tree is increased by at most a factor of size $O(\log n)$, and thus stays soft-linear in $n$ and polylogarithmic in $\tau$ in the worst case.

\subsection{Changes over the classical RS}
\label{sec:changes-to-rs}

\paragraph{Full caching instead of constant-memory version}
\addcontentsline{toc}{subsubsection}{Full caching instead of constant-memory version}\label{par:full-caching}
The default instantiations of RS use a near-constant-memory strategy.
None of the intermediate local polynomials are cached, but are computed from the previously considered node.
In general, this means that larger round-off errors are accumulated, and expensive recomputation from scratch is potentially required more often.
Yet, performance comparisons show a clear benefit of this approach due to the drastically reduced costs for memory management
and RS' sophisticated way of performing incremental computations between adjacent nodes in the subdivision tree~\cite[Sec.~4]{rouillier-zimmermann:roots:04}.

However, the fast transformations rely on certain structural properties of the subdivision tree:
in a pure bisection algorithm, all emerging intervals are of the form $(\frac{c}{2^e}, \frac{c+1}{2^e})$ for some integers $c$ and $e$.
In this setting and with the strong specification of the traversal order of the tree, a good portion of the arithmetic operations for Taylor shifts can be achieved by cheap bitshift and addition operations.
In contrast, a general Taylor shift requires noticeably more expensive multiplications of arbitrary precision interval values.

These relations between the intervals are lost when Newton steps are performed or the canonical dyadic subdivision points need to be replaced by (pseudo-)admissible points.
Furthermore, optimizations such as the delayed Newton calls are incompatible with strategies that keep only one interval in memory at a time,
because the behavior of the algorithm on some interval is no longer independent of the neighborhood:
whether a quadratic convergence step is pursued becomes conditional on the outcome of the 01-Test on its sibling.

Fortunately, using the Newton technique, long chains of intervals with a unique descendant in the bisection tree are compressed to only logarithmic height, thus shrinking the entire tree by an order of magnitude in the worst-case and even two orders of magnitude for some special instances (e.g. sparse polynomials with clustered roots such as Mignotte polynomials).
In this light, the impact of the constant-memory strategy is much less pronounced, and it is advisable to switch back to the naive approach of caching all intermediate results.
We stress that, without the Newton iteration, the memory consumption in a depth-first traversal would be prohibitive for particular examples with strongly clustered, ill-separated roots such as Mignotte polynomials.

\paragraph{Degree truncation through partial Taylor shifts}
\addcontentsline{toc}{subsubsection}{Degree truncation through partial Taylor shifts}\label{par:partial-TS}
We can consider any univariate polynomial $P(x)$ as a function
$P(x) = p_n \cdot \prod_{j=1}^n (x - z_j)$
in its (not necessarily distinct) complex roots $z_j$.
When the domain of interest is restricted to a small region in the complex plane, say, a disk $D \subset \CC$,
the relative influence of each root $z_j$ on $P$ scales with the distance of $z_j$ to the points $x \in D$.
If there are only $k \ll n$ roots in $D$ and all other roots are well-separated from $D$, the distance to the remote roots is almost stable for all $x \in D$,
and $P \vert_D$ is dominated by the roots in~$D$.

In this situation, the local behavior of $P$ is captured in its partial Taylor expansion around a point within $D$ up to the $k$-th term.
Hence, an approximation of $P$ by its truncated Taylor expansion might suffice for computing isolating intervals for the real roots.
We consider this approach whenever a Newton step succeeds and suggests the existence of a well-separated cluster of $k \ll n$ roots around an interval $I$.
We remark that the multiplicity guess $k \coloneqq \operatorname{round}(\tilde{k})$ is already computed within the Newton-Test.
To ensure correctness, we conservatively take into account the truncation error as the interval evaluation of the Lagrangian remainder term.

More precisely, whenever the inclusion-exclusion-predicates of \ANewDsc call for sign variation tests on some $P_I$,
we work on
an approximation for the intermediate polynomial $Q_I :=\sum_{i=0}^n q_{I,i}x^i:= P(a + (b-a)x)$ on $I = (a,b)$,
\begin{align*}
  \smash{\tilde{Q}}_I(x) &= \sum\nolimits_{i=0}^{k-1} q_{I,i} x^i + [r]_{I,k} x^{k}
  ,\quad [r]_{I,k} \supset \smash{\frac{P^{(k)}(I)}{k!}},
\end{align*}
where the coefficient $[r]_{I,k}$ of the Lagrangian remainder is evaluated on the entire interval $I$ in conservative interval arithmetic.
The root exclusion test on the truncated polynomial remains almost identical; however, we modify the certificate for inclusion of a single root.

\begin{lemma}
  Let $P$ be a polynomial as in \cref{def:polyP} and $I = (a,b)$ and $\smash{\tilde{Q}}_I$ be as above.
  Define $\tilde{G}_I \coloneqq (x+1)^k \;\smash{\tilde{Q}}_I ((x+1)^{-1})$ and $\tilde{H}_I \coloneqq (x+1)^{k-1} \;\smash{\tilde{Q}}'_I ((x+1)^{-1})$.
  \begin{enumerate}[noitemsep]
  \item If $\tilde{v}_I \coloneqq \var(\tilde{G}_I) = \set{0}$, then $P$ has no root in $I$.%
    \footnote{Recall that $\var(\cdot)$, called on an interval polynomial, returns a (super-)set of the possible numbers of sign variations;
      see the remarks about the use of interval arithmetic in RS in \cref{sec:implementation-details}.}
  \item If $\tilde{v}_I = \set{1}$ and, additionally, $\tilde{v}'_I \coloneqq \var (\tilde{H}_I) = \set{0}$, then $P$ has exactly one simple real root in $I$.
  \end{enumerate}
\end{lemma}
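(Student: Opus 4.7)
The plan is to reduce both claims to Descartes' rule applied \emph{realization-wise} to the interval polynomial $\tilde{Q}_I$, exploiting the rescaling $Q_I(x)=P(a+(b-a)x)$ so that $P$ has a root in $I=(a,b)$ if and only if $Q_I$ has a root in $(0,1)$. The crucial bridge will be the following enclosure: by Taylor's theorem with Lagrange remainder, applied to $P$ around $a$ and truncated after $k-1$ terms, for every $x\in[0,1]$ there exists $\xi\in I$ with
\begin{align*}
  Q_I(x)=\sum_{i=0}^{k-1}q_{I,i}\,x^i+\tfrac{P^{(k)}(\xi)}{k!}\,(b-a)^k\,x^k.
\end{align*}
Since $[r]_{I,k}$ is a conservative interval enclosure of this remainder coefficient (with the scaling $(b-a)^k$ absorbed), the real number $Q_I(x)$ lies in the interval set $\tilde{Q}_I(x)$ for every $x\in[0,1]$. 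Applying the same argument to $P'$ yields $Q_I'(x)\in\tilde{Q}_I'(x)$ pointwise.

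Next I would observe that $\tilde{G}_I$ and $\tilde{H}_I$ are the standard Möbius transforms that send roots of $\tilde{Q}_I$ and $\tilde{Q}_I'$ in $(0,1)$ to positive real roots, and invoke Descartes' rule for each realization separately: $\var(\tilde{G}_I)=\set{0}$ (respectively $\var(\tilde{H}_I)=\set{0}$) means that \emph{no} realization of $\tilde{Q}_I$ (respectively $\tilde{Q}_I'$) has a root in $(0,1)$. For part~(1) I would then argue by contradiction: if $P(z)=0$ for some $z\in I$ and $x_0=(z-a)/(b-a)\in(0,1)$, then $Q_I(x_0)=0$ forces $0\in\tilde{Q}_I(x_0)$, exhibiting a realization of $\tilde{Q}_I$ with a root at $x_0$ and contradicting $\tilde{v}_I=\set{0}$.

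For part~(2), $\tilde{v}_I'=\set{0}$ gives that $Q_I'$ has no root in $(0,1)$, so $Q_I$ is strictly monotonic on $[0,1]$ and has at most one, necessarily simple, root there. Existence comes from $\tilde{v}_I=\set{1}$: each realization of $\tilde{Q}_I$ has exactly one simple root in $(0,1)$ and hence changes sign between $x=0$ and $x=1$. The value $\tilde{Q}_I(0)=q_{I,0}=Q_I(0)$ is independent of the realization, so every realization's value at $x=1$ has sign opposite to $Q_I(0)$; the entire interval $\tilde{Q}_I(1)$ therefore lies strictly on the side opposite $Q_I(0)$, and in particular the specific point $Q_I(1)\in\tilde{Q}_I(1)$ does too. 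The intermediate value theorem then supplies the required root of $Q_I$ in $(0,1)$.

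The main obstacle I expect is conceptual rather than computational: Descartes' rule applies to honest real polynomials, not to interval polynomials, so its realization-wise conclusions must be transported back to $Q_I$ through the enclosure $Q_I(x)\in\tilde{Q}_I(x)$. This works immediately for the exclusion direction in part~(1), but for existence in part~(2) it requires the additional sign-matching argument that leverages the realization-independence of $\tilde{Q}_I(0)$ to pin down the sign of $Q_I(1)$.
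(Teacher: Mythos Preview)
Your proof is correct and follows essentially the same approach as the paper's. Both arguments hinge on the observation that, by Taylor's theorem with Lagrange remainder, for each fixed $x\in[0,1]$ there is a specific realization $F_I\in\tilde Q_I$ (with leading coefficient $\frac{P^{(k)}(\xi)}{k!}(b-a)^k$ for some $\xi\in I$) satisfying $F_I(x)=Q_I(x)$; the paper works directly with this $F_I$, while you phrase it as the pointwise enclosure $Q_I(x)\in\tilde Q_I(x)$, but the content is the same.

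The only notable difference is in part~(2): the paper simply asserts ``since $\tilde v_I=\{1\}$, we know that $P$ takes values of different sign at the endpoints of $I$'' without further justification, whereas you spell out the argument via the realization-independence of $\tilde Q_I(0)=q_{I,0}$ to pin down the sign of $Q_I(1)$. Your version is more careful here. A slightly cleaner route to the same conclusion---closer in spirit to how the paper presumably intends it---is to observe directly from the coefficient sequence: the leading coefficient of $\tilde G_I$ is the exact number $q_{I,0}=P(a)$ and its constant term is the interval $\tilde Q_I(1)\ni P(b)$, and $\var(\tilde G_I)=\{1\}$ forces these to have definite opposite signs. Either way, the logic is sound.

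One small phrasing point: in part~(1), the step ``$0\in\tilde Q_I(x_0)$, exhibiting a realization of $\tilde Q_I$ with a root at $x_0$'' is not a formal implication in general interval arithmetic (interval evaluation can overestimate), but it is justified here precisely because your Taylor argument already supplies the concrete realization $F_I$ with $F_I(x_0)=0$. The paper makes this explicit by naming $F_I$ and noting that $(x+1)^kF_I((x+1)^{-1})\in\tilde G_I$ since the interval Möbius transform overestimates; you may want to make the same point explicit.
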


\begin{proof}
  We consider the polynomials $\smash{\tilde{Q}}_I$, $\tilde{G}_I$ and $\tilde{H}_I$ simultaneously as polynomials with interval coefficients and as the set of exact polynomials with coefficients contained in the intervals.
  Assume that $\tilde{v}_I = 0$, but $P$ has a root $\xi = a + \lambda (b-a) \in I$.
  By the Lagrange representation of the truncation error in Taylor's theorem, there exists a $\xi\in I$ such that the polynomial
  $
    F_I (x):= \sum\nolimits_{i=0}^{k-1} q_{I,i} x^i+ \frac{P^{(k)}(\xi)}{k!} x^k
  $
  of degree $k$ has a root at $\lambda$.
  Hence, Descartes' Rule of Signs asserts $\var((x+1)^k\,F_I((x+1)^{-1})) > 0$.
  Since $F_I \in \smash{\tilde{Q}}_I$, it follows that $(x+1)^k\,F_I((x+1)^{-1})\in \tilde{G}_I$ because the transformations in interval arithmetic overestimate the possible values. This contradicts the assumption that $\tilde{v}_I={0}$.

  For the second part, an analogous argument on the derivative shows that $P$ is strictly increasing or strictly decreasing on $I$ if $\tilde{v}' = \set{0}$.
  Since $\tilde{v}_I = \set{1}$, we know that $P$ takes values of different sign at the endpoints of $I$; hence, $P$ has exactly one root in $I$.
\end{proof}

If a 01-Test in the processing of a node is unsuccessful, we usually double the working precision.
However, if the local polynomials are only partial Taylor expansions, the source of the loss in accuracy may also stem from the truncation.
Hence, before increasing the working precision, we gradually double the multiplicity guess $k$ until we eventually compute the full polynomial.
This approach guarantees that, in the worst case, after $\log n$ steps the heuristic falls back to the theoretical model.
Hence, even if the heuristic is unsuccessful, the complexity increases at most by a factor of $\log n$.

We observe that partial Taylor shifts give a tremendous speedup for polynomials with tight clusters of low multiplicity.
In such situations, we can consider the technique as a partial substitute for the current deficiency of asymptotically fast Taylor shifts.
For instances with clusters consisting of constantly many roots, the expected speedup is linear in $n$:
if the heuristic applies, the local polynomials on each active region in the subdivision tree have to computed up to only constantly many coefficients of the Taylor series.
This prediction is accurately reflected in the benchmarks.

\section{Experiments}

We present and discuss a short excerpt of our extensive benchmark suite; the entire collection is available online at \url{http://anewdsc.mpi-inf.mpg.de/}.
The objective is to illustrate that the integration of \ANewDsc into RS entails very small overheads for instances with well-distributed roots and shallow subdivision trees,
but huge performance gains in challenging situations with clustered roots.

\flexvspacenoindent

The classical RS serves as a baseline in a configuration similar to the one used in Maple, but without some crucial optimizations to improve comparability.
In particular, we disabled the hardware floating-point phase, which is not yet supported for \ANewDsc, as well as the heuristic for switching to exact arithmetic that can improve performance, but destroys the guarantees on the expected precision demand.
Besides the full-fledged \ANewDsc (denoted as \textsc{AND} in the tables), we also list the intermediate variant \ADsc.
It includes subdivision on pseudo-admissible points and full caching, but neither Newton iterations nor degree truncation.

We compare to three well-established competitors:
\emph{MPSolve} (named MPS in the tables)
is the leading complex root solver.
It is known to keep up with the most efficient real root isolators even though it solves a more general problem.
Given our more modest goals, we restrict the region of interest to the real line, and call MPSolve in its latest version 3.1.5 with
\texttt{unisolve -au -Gi -SR -Dr -Of -j1 -o1048576}.

CF denotes the \emph{continued fraction}-based variant of the Descartes method available in \emph{Mathematica} 10.
We bypass the high-level interface of the computer algebra system via \texttt{System`Private`RealRoots} to eliminate the effect of preprocessing stages that are applied by the usual \texttt{RootIntervals} call
(e.g. the detection and special handling of sparse polynomials or simplifications for even or odd polynomials).

Finally, we compare against Carl Witty's variant of Eigenwillig's bitstream Descartes method in Bernstein basis~\cite{DBLP:conf/casc/EigenwilligKKMSW05},
which constitutes the default real root isolator in the \emph{Sage} 7.0 open-source computer algebra system.
We invoke the algorithm in the optimized variant for 64-bit architectures with \texttt{real\_roots (f, skip\_squarefree=True, wordsize=64)}.

All instances are passed as dense integer polynomials, filtered in a preprocessing stage to factor out side effects of optimizations unrelated to the solver.
It involves the trivial algebraic simplification of even and odd polynomials, verification of square-freeness and reduction to the primitive part.
The values in the tables are runtimes in seconds, measured on a single run on the same machine.
Degree and coefficient bitsize are denoted by $n$ and $\tau$.

\floatplacement{table}{t!}
\begin{table}
  \tablesize
  \begin{tabular*}{\linewidth}{D@{\extracolsep{\fill}}BHTTTTTT}
    \toprule
    $n$      & $\tau$ & \#sol & MPS      & CF       & Sage     & RS       & \textsc{ADsc} & \textsc{AND} \\
    \midrule
    257      & \vcenterwithline{7}{14}     & 3     & 0.7      & 0.1      & 1.6      & 7.6      & 7.7           & 0.1          \\%
    513      &      & 3     & 2.9      & 0.2      & 2.4      & 87.6     & 89.1          & 0.2          \\%
    1025     &      & ERR?  & 13.8     & 1.1      & 4.8      & \timeout & \timeout      & 0.7          \\%
    2049     &      & 3     & 78.5     & 7.2      & 12.8     &          &               & 2.9          \\%
    4097     &      & 3     & 486.3    & 67.6     & 85.8     &          &               & 11.2         \\%
    8193     &      & 3     & \timeout & 224.3    & \timeout &          &               & 43.2         \\%
    16385    &      & 3     & \emph{}  & \timeout &          &          &               & 188.1        \\%
    \cmidrule(lr){1-9}
    \vcenterwithline{6}{129}      & 128    & 3     & 1.2      & 0.3      & 0.5      & 26.2     & 25.1          & 0.1          \\%
          & 512    & 3     & 5.9      & 3.8      & 3.9      & 378.5    & 293.8         & 0.4          \\%
          & 2048   & 3     & 43.4     & 78.4     & \timeout & \timeout & \timeout      & 3.3          \\%
          & 8192   & 3     & 274.7    & \timeout &          &          &               & 20.8         \\%
          & 32768  & 3     & \timeout &          &          &          &               & 96.8         \\%
          & 131072 & 3     & \emph{}  &          &          &          &               & 503.0        \\%
    \bottomrule
  \end{tabular*}
  \caption{Mignotte: $x^n - ((2^{\tau/2}-1)x - 1)^2$}
  \label{table:mignotte}
\end{table}

\begin{table}
  \tablesize
  \renewcommand{\error}{\multirow{4}{*}{\llap{\rotatebox{90}{\parbox{\widthof{\emph{overflow}}}{\centering\emph{call\\stack\\overflow}}}}}}
  \newcommand{\errori}{\multicolumn{1}{c}{\textcolor{badcolor}{\emph{call}}}}
  \newcommand{\errorii}{\multicolumn{1}{c}{\textcolor{badcolor}{\emph{stack}}}}
  \newcommand{\erroriii}{\multicolumn{1}{c}{\textcolor{badcolor}{\emph{over-}}}}
  \newcommand{\erroriv}{\multicolumn{1}{c}{\textcolor{badcolor}{\emph{flow}}}}
  \begin{tabular*}{\linewidth}{D@{\extracolsep{\fill}}BHTTTTTT}
    \toprule
    $n$ & $\tau$ & \#sol & MPS & CF & Sage & RS & \textsc{ADsc} & \textsc{AND}\\
    \midrule
    260 & \vcenterwithline{6}{140} &    12 &   2.7 &   1.0 &   1.9 &   1.6 &   1.7 &   0.4\\%
    516 &  &    12 &   9.5 &  21.6 &   3.3 &  18.0 &  18.0 &   0.8\\%
    1028 &  &    12 &  67.8 &  565.0 & \timeout &  230.3 &  232.4 &   7.1\\%
    2052 &  &    12 &  283.6 &  \timeout &  & \timeout & \timeout &  13.1\\%
    4100 &  &    12 & \timeout &  &  &  &  &  88.4\\%
    8196 &  &    12 & \emph{} &  &  &  &  &  394.0\\%
    \cmidrule(lr){1-9}
    \vcenterwithline{5}{260} & 160 &    12 &   3.1 &   1.4 &   0.3 &   2.2 &   2.3 &   0.9\\%
     & 640 &    12 &   5.8 &  14.9 &  \errori &  20.8 &  19.2 &   0.9\\%
     & 2560 &    12 &  33.7 &  222.6 & \errorii &  301.2 &  254.0 &   3.7\\%
     & 10240 &    12 &  248.0 &  \timeout & \erroriii & \timeout & \timeout &  31.4\\%
     & 40960 &    12 & \timeout &  & \erroriv &  &  &  341.2\\%
    \bottomrule
  \end{tabular*}
  \caption{nested Mignotte: $\prod_{i=1}^4 \big(x^{n/4} - ((2^{\tau/8}-1)x^2-1)^{2i}\big)$}
  \label{table:cascaded-mignotte-irrat}
\end{table}

\begin{table}
  \tablesize
  \begin{tabular*}{\linewidth}{D@{\extracolsep{\fill}}BHTTTTTT}
    \toprule
    $n$ & $\tau$ & \#sol & MPS & CF & Sage & RS & \textsc{ADsc} & \textsc{AND}\\
    \midrule
    1024 & \vcenterwithline{5}{1024} &     4 & {  2.5} &   0.3 &   4.1 &   0.6 &   0.6 &   0.5\\%
    2048 &  &     4 & {  9.8} &   0.9 &   8.4 &   1.6 &   1.7 &   1.7\\%
    4096 &  &     4 & { 37.5} &   2.7 &  29.1 &   3.8 &   3.7 &   3.5\\%
    8192 &  &     4 & { 159.4} &  22.1 &  183.5 &  36.0 &  36.3 &  36.5\\%
    16384 &  &     6 & { 578.9} &  376.6 & \timeout &  275.1 &  280.9 &  279.0\\%
    \bottomrule
  \end{tabular*}
  \caption{dense with uniformly random coefficients in $(-2^\tau,2^\tau) \cap \ZZ$}
  \label{table:random-uniform}
\end{table}
\begin{table}
  \tablesize
  \begin{tabular*}{\linewidth}{D@{\extracolsep{\fill}}BHTTTTTT}
    \toprule
    $n$ & $\tau$ & \#sol & MPS & CF & Sage & RS & \textsc{ADsc} & \textsc{AND}\\
    \midrule
    256 & 506 &    24 &   4.7 &  34.1 &   1.6 &  18.2 &  19.1 &   1.1\\%
    512 & 762 &    24 &  25.9 &  456.9 &   3.7 &  107.7 &  110.4 &   3.0\\%
    1024 & 1274 &    48 &  207.5 & \timeout &  22.5 & \timeout & \timeout &  23.6\\%
    2048 & 2296 &   nan & \timeout &  & \timeout &  &  &  132.2\\%
    4096 & 4343 &    80 & \emph{} &  &  &  &  &  596.6\\%
    \bottomrule
  \end{tabular*}
  \caption{clustered: $f^2 - 1$ for $f = \sum_i a_i \binom{n}{i}^{1/2} x^i / \sqrt{i+1}$ with $a_i$ drawn from a normal Gaussian distribution, rounded to $\ZZ[x]$}
  \label{table:clustered-farahmand}
\end{table}

\flexvspacenoindent
Polynomials of Mignotte type are classical benchmark instances.
Their ill-separated pair of roots at approximately $2^{-\tau/2} \pm 2^{-n\tau/2}$ forces a huge subdivision depth in bisection approaches.
This behavior is mitigated by the quadratic convergence in \ANewDsc:
The subdivision tree size shrinks from approximately 33500 nodes for RS and $\ADsc$ to a mere 47 for $\ANewDsc$ for $(n,\tau) = (129, 512)$,
and even an instance with $\tau=2^{16}$ leads to a tree with only 65 nodes.
For such instances, CF presumably exhibits an almost optimal subdivision tree due to the rational center of the cluster; however, the performance is spoiled by the use of exact arithmetic.
The class of nested Mignotte polynomials, shown around an irrational center in \cref{table:cascaded-mignotte-irrat},
shows the robustness of \ANewDsc both to the higher multiplicity 20 and the irrational center of the cluster.

Random polynomials have a low number ($\Theta(\log n)$) of well-separated real roots; they require only low precision for the isolation and induce flat subdivision trees.
Hence, we can expect no gain from the improvements in \ANewDsc.
On the other hand, the experiments confirm that the enhancements incur almost no additional cost.
We observe similar overheads of low constant factors compared to RS on other classes of polynomials with well-distributed roots,
such as Wilkinson polynomials with evenly spaced real roots, different classes of orthogonal polynomials,
or resultants of random bivariate polynomials that arise in projection-based polynomial system solving and the topology analysis of real algebraic curves.

Finally, polynomials with normal Gaussian-distributed coefficients have, in expectation, significantly more real roots ($\Theta(\sqrt{n})$).
By squaring and perturbing such inputs, we construct benchmark instances with many clusters of multiplicity two.
We find that \ANewDsc quickly detects the clusters, succeeds in the Newton steps, and computes the appropriate number of terms in the partial Taylor shifts.

\flexvspacenoindent
The benchmarks are performed on polynomials with exact integer coefficients.
Nevertheless, we emphasize that \ANewDsc can process inputs with arbitrary approximable, possibly irrational coefficients.
We observe that the performance is unaffected if, for example, the coefficients of the random instances (\cref{table:random-uniform,table:clustered-farahmand}) are drawn from a continuous distribution,
or the coefficients of the Mignotte-like polynomials are irrational numbers of comparable magnitude.

\section*{Acknowledgments}
We thank Maplesoft for their support in the development of RS,
Leonardo Robol (Katholieke Universiteit Leuven) for his help and
support on running the development version of MPSolve
and Adam Strzebonski (Wolfram Research) for his instructions on how to
control the behavior of Mathematica's \texttt{RootIntervals} function.

\clearpage
{%
  \let\oldthebibliography\thebibliography%
  \renewcommand{\thebibliography}[1]{
    \oldthebibliography{#1}
    \addcontentsline{toc}{chapter}{References}
    \setlength{\itemsep}{0pt plus 1ex}
    \raggedright
  }%
  \bibliographystyle{shortabbrv}
  \bibliography{bib}
}

\clearpage
\phantomsection\addcontentsline{toc}{chapter}{Appendix}
\appendix

\section{Comparison to other solvers}

Polynomial root solving is a prevailing and fundamental task in numeric and symbolic computation.
Many different approaches have been successfully implemented with different objectives.
Still, the most successful solvers in widespread use today are variants of only two classes of algorithms:
different approaches based on Descartes' Rule of Signs and the Aberth-Ehrlich iteration for complex root finding.

We compared our implementation of \ANewDsc to the state-of-the-art solvers that we are aware of.
In the following section, we give a brief survey of the characteristics, weaknesses and strengths of the main candidates for comparison.
We do so without any claim for completeness; it is impossible to give due credit to all capabilities of those mature solvers.
Also, we ignore methods based on Sturm sequences or Pan's asymptotically near-optimal polynomial factorization, as we are not aware of competitive implementations of those methods.

\paragraph{MPSolve}
\addcontentsline{toc}{subsection}{MPSolve}\label{par:comp-mpsolve}
As the only complex root solver, MPSolve stands out in this list.
It is based on the Aberth-Ehrlich root finding iteration, a modified Newton method that simultaneously refines approximations for all complex roots.
This numerical method is very efficient in practice, despite the fact that it lacks theoretical convergence guarantees.
The computations are done in pure multiprecision arithmetic without interval arithmetic, but the results are certified based on a priori round-off error bounds an a posteriori Gershgorin-type argument.

A major selling point of MPSolve is its triple-stage approach: first, the root isolation is tried with pure hardware floating-point numbers.
If this is unsuccessful, a \texttt{double} type with extended exponent range is employed, and only if absolutely necessary, multiprecision arithmetic is used.
Hence, MPSolve excels for instances where the roots can be isolated within low precision.
In such situations, it is on par with dedicated real solvers despite answering a more general problem.
Since version 3, another still unique feature of the solver is its full support for multithreading.

Unfortunately, until this date the Aberth-Ehrlich iteration comes without any termination or even worst-case complexity guarantees.
In constrast to all other methods in the comparison, the method is intrinsically global;
solutions or clusters of roots can be excluded in the later stages of the algorithm, but the boundaries of the region of interest is fuzzy, and nearby roots affect the behavior of the algorithm.
Finally, the certification method is extremely sensitive to a very careful theoretical analysis and an accurately matching implementation.
For example, once fast polynomial becomes beneficial for real-world instances, the necessary modifications of the validation routines require a new explicit analysis of the round-off error bounds.%
\footnote{Private communication with Leonardo Robol.}

For our benchmarks, we use MPSolve in the most recent development version 3.1.5.
MPSolve 2.2 is still very slightly superior on selected instances, but both for MPSolve 2.2 and the most recent secsolve algorithm (which used a dedicated solver tailored for secular equations), we experienced minor bugs in the certification phase.
The authors are currently investigating the problems and already corrected some of the issues.%
\footnote{While the bugs in secsolve have been resolved, it will be difficult to backport the necessary fixes for MPSolve 2.2 (private communication with Leonardo Robol).}
For the time being, we only run the unisolve configuration of MPSolve 3 using the call
\texttt{unisolve -au -Gi -SR -Dr -Of -j1 -o1048576} on dense inputs with exact integer coefficients.
We remark that the \texttt{-SR} option restricts the region of interest to the real line, in agreement with our more modest goals.

\paragraph{Continued fraction-based solver in Mathematica}
\addcontentsline{toc}{subsection}{Continued fraction-based solver in Mathematica}\label{par:comp-cf}
The implementation in Mathematica
is a highly efficient pure real root solver.
As RS and the other real solvers, it relies on Descartes' Rule of Signs, but uses a different subdivision strategy:
successive computations of root bounds for local polynomials are exploited to compute continued fraction approximations of the roots.

The major benefit of this approach is an extremely fast convergence to rational roots and the choice rational subdivision points with low bitsize of numerator and denominator.
For polynomials with mostly rational and well-separated roots, this method is often superior to other competitors.

However, the price paid in the present variant of the solver is that exact arithmetic is used.
Hence, there is no easy way to adapt for bitstream inputs where it can be impossible to decide conclusively whether a subdivision point is a root or not,
and termination is not clear without specific safety measures if a root is chosen.
Furthermore, even if the algorithm can quickly solve an instance $f \in \ZZ[x]$, the algorithm can be forced into extremely expensive exact computations on inputs with high bitsize,
say $c\cdot f - 1$ for some huge $c \in \ZZ$, despite the fact that the intrinsic difficulty as a numerical problem remains the same.

The continued fraction scheme achieves optimal convergence against rational roots.
However, irrational algebraic numbers are not necessarily easy to approximate in this setting: the golden ratio $\Phi = (1 + \sqrt{5})/2$ is a notoriously hard example and shows the slowest convergence rate possible.
Thus, it is easy to construct instances that force continued fraction-based solvers into linear convergence.
For example, the Mignotte-like polynomial $x^n - (a\,x-1)^2$ for an integer $a \in \ZZ_{>1}$ has roots at approximately $1/a \pm a^{-n/2}$.
A subdivision at exactly $1/a$ is easily achieved by the method and a perfect split for the cluster of two roots.
However, this is merely an artifact by construction: the polynomial $x^n - (a\,x^2-1)^2$ for $a \in \ZZ_{>1}$ not a square number has roots at approximately $1/\sqrt{a} \pm a^{-n/4}$.
But until the algorithm arrives at a sufficiently good rational approximation of the center of the cluster to split the roots, it has to go through through $\Omega (n)$ iterations with only linear convergence.

Those shortcomings are responsible for a loss of one and two orders of magnitudes in the worst-case complexity, and can be easily be enforced by an adversary.

For a fair comparison, we call the continued fraction solver in Mathematica 10 with its default configuration directly via \texttt{System`Private`RealRoots} to bypass the high-level interface.
The latter performs several preprocessing operations, such as detection and optimized handling of sparse polynomials or the replacement of even polynomials $f(x^2)$ by $f(x)$ and reconstruction of the original roots later on.

\paragraph{Sage}
\addcontentsline{toc}{subsection}{Sage}\label{par:comp-sage}
The computer algebra system Sage
collects a wealth of high-quality packages and libraries from the open source world in one easily accessible framework, complemented with original implementations of new algorithms.
It includes a real solver written in Cython by Carl Witty, which is loosely based on the Bitstream Descartes method by Eigenwillig~\cite{eigenwillig-phd}.
The routine by the apt name \texttt{real\_roots} is unique as a very efficient root finder written in a high-level language, but relies on the well-established MPFI library for interval arithmetic in arbitrary precision.
All computations are performed in Bernstein basis using de Casteljau's algorithm, and heuristics for degree reduction are used to achieve similar effects as the partial Taylor shifts in \ANewDsc.
With respect to the convergence speed, the implementation notes state that a hardware-oriented strategy is used for the selection of the subdivision points.
To the best of our knowledge, however, there is no thorough description or analysis of worst- or expected-case complexity of the algorithm.

Sage's \texttt{real\_roots} is a very good general-purpose solver.
Due to the extensive use of approximate arithmetic, it stays close to the optimum precision demand and is conceptually suited to be used as a bitstream solver.
In our experiments, \texttt{real\_roots} converged quickly to tight clusters.
Nevertheless, such situations seem to be its Achilles' heel:
In contrast to the competitors, \texttt{real\_roots} is implemented in a recursive rather than iterative fashion.
For inputs that force a very deep computation tree, the algorithm is prone to call stack overflows when there are deeply nested clusters and no tail call elimination applies.

We call the solver in release 7.0 of Sage, using the default settings without square-free decomposition as a preprocessing step, via
\texttt{real\_roots (f, skip\_squarefree=True, wordsize=64).}

\paragraph{Classical RS in Maple}
\addcontentsline{toc}{subsection}{Classical RS in Maple}\label{par:classical-rs}
The classical RS version has already been described before.
It serves as a general-purpose solver in the Maple computer algebra system.
Over the course of fifteen years of development, a collection of many optimizations and heuristics have been integrated to suit the requirements both as an internal routine in Maple and for direct users.

Some of these improvements are incompatible or significantly less effective when combined with \ANewDsc.
Others sacrifice the theoretical optimality or are not suited for bitstream inputs; most prominently, this applies to the heuristics for switching to exact arithmetic.
Despite the fact that exact computations are only selectively used, the classical RS version can be forced into asymptotically bad behavior with high bitsize inputs in a similar manner as the solver in Mathematica.
Hence, the baseline for our experiments is not the version inside Maple, but rather a simplified variant that focuses on simplicity and provably asymptotic optimality.
In particular, so far none of the new routines for \ANewDsc are available for interval arithmetic with hardware floating point numbers.

The loss in performance of the stripped-down version is often negligible, but can raise up to a factor of 10 for very particular instances with low intrinsic difficulty such as Wilkinson polynomials,
where all roots are real and well-structured, exact arithmetic is asymptotically optimal with respect to the precision demand, and optimizations apply to greatest extent.

\section{Benchmark suite}

All our measurements are performed on a server with four Intel Xeon E5-2680 v3 processors with twelve cores each, clocked at 2.5 GHz.
The CPU has 30 MB shared L3 cache, and each core has 256 KB L2 cache.
The server has a total of 256 GB RAM and runs on Debian 8 ``jessie'' 64-bit.
The measurements are taken from a single run with a timeout of 10 minutes; for timings above 0.1 seconds, we experienced a very stable timing with deviations between runs below one percent for all solvers.
However, since the running times are not averaged over many runs, the results on random inputs are only comparable horizontally, not vertically:
the solvers are called on the same polynomials, but only one instance for each magnitude is considered.
All polynomials are passed as dense polynomials with arbitrary precision integer coefficients.
The parsing time for the input is not factored out:
we noticed that the parsing in our preliminary interface for \ANewDsc is very slow, and for random high-bitsize inputs can take significantly longer than the total time to completion for other solvers.
For the time being, the instances are of moderate size such that reading the input is almost negligible even for the variants of RS.

Our full benchmark suite as well as our software for popular architectures (Linux, OSX and Windows) is available online at  \url{http://anewdsc.mpi-inf.mpg.de/}.
The readers are invited to verify our results, include their favorite alternative solvers, and inform us about meaningful extensions.
We hope that the collection of polynomials grows to a representative corpus of inputs to put polynomial root isolators to the acid test,
and look forward to suggestions of both artificial and realistic instances.

\end{document}